\newtheorem{theorem}{Theorem}
\newtheorem{lemma}{Lemma}
\begin{document}
%
% paper title
% can use linebreaks \\ within to get better formatting as desired
\title{%\vspace{-0.3in
Local Popularity Based Collaborative Filters}

% author names and affiliations
% use a multiple column layout for up to three different
% affiliations
\author{
\IEEEauthorblockN{Kishor Barman}
\IEEEauthorblockA{School of Technology and Computer Science\\
Tata Institute of Fundamental Research \\
Mumbai, India\\
Email: kishor@tcs.tifr.res.in}
\and
\IEEEauthorblockN{Onkar Dabeer}
\IEEEauthorblockA{School of Technology and Computer Science\\
Tata Institute of Fundamental Research\\
Mumbai, India\\
Email: onkar@tcs.tifr.res.in}
}

% make the title area
\maketitle

\begin{abstract}
Motivated by applications such as recommendation
systems, we consider the estimation of a binary random field $\mathbf X$
obtained by {\it unknown} row and column permutations of a block constant
random matrix. The estimation of $\mathbf X$ is based on observations
$\mathbf Y$, which are obtained by passing entries of $\mathbf X$ through a binary
symmetric channel (BSC) (representing noisy user behavior) and an
erasure channel (representing missing data).
We analyze an estimation algorithm based on local popularity. We
study the bit error rate (BER) in the limit as the matrix size
approaches infinity and the erasure rate approaches unity at a
specified rate. Our main result identifies three regimes characterized
by the cluster size and erasure rate. In one regime, the algorithm has
asymptotically zero BER, in another regime the BER is bounded away from
0 and 1/2, while in the remaining regime, the algorithm fails and BER
approaches 1/2. Numerical results for the Movielens dataset and
comparison with earlier work is also given.
%Motivated by applications such as recommendation systems, we consider the estimation of a binary random field $\mathbf X$ obtained by row and column permutations of a block constant random matrix. The estimation of $\mathbf X$ is based on observations $\mathbf Y$, which are obtained by passing entries of $\mathbf X$  through a binary symmetric channel (BSC) and an erasure channel. We focus on the analysis of a specific algorithm based on local popularity when the erasure rate approaches unity at a specified rate. We study the bit error rate (BER) in the limit as the matrix size approaches infinity. Our main result states that if the cluster size (that is, the size of the constancy blocks in the original matrix) is above a certain threshold, then the BER approaches zero, but below the threshold, the BER is lower bounded away from zero. The lower bound depends on the noise level in the observations and the size of the clusters in relation to the threshold. The threshold depends on the rate at which the erasure probability approaches unity. 
\end{abstract}

\section{Introduction}

%% Talk about TiVo. One can rate thumbs up or down from the TiVo remote. 
%% Talk about efficiency.

Recommendation systems are commonly used in e-commerce to suggest relevant content to users. One approach considers the user-item rating matrix, predicts the missing entries, and recommends items based on the predicted values (for example, see \cite{netflixprize}). Recently, a number of researchers have considered mathematical models for this problem and studied fundamental limits. One model assumes the rating matrix to be a low-rank random matrix (\cite{Candes1,Montanari1,Bresler1}), and then  bounds on the number of samples needed to recover the {\it complete} matrix with high probability are obtained. In another model (\cite{Aditya1, Aditya2}), the rating matrix is assumed to be obtained from a block constant matrix by applying unknown row and column permutations, a noisy discrete memoryless channel representing noisy user behavior, and an erasure channel denoting missing entries. The goal for such a model is not matrix completion, but estimation of the underlying ``noiseless'' matrix. In  \cite{Aditya1, Aditya2},  the probability of error in recovering the {\it entire} matrix  for fixed erasure rate is considered, and threshold results reminiscent of the channel coding theorem (but with different scaling) are established.

 In this paper, we consider the model in \cite{Aditya2}, but we allow the erasure rate to approach unity, and focus on the BER - the probability of error that a specific recommendation fails. We analyze the BER for a specific algorithm, which makes recommendations based on ``local popularity''. Such an analysis is of interest for two reasons:
\begin{itemize}
\item It gives an upper bound on achievable BER;
\item The local popularity algorithm used is motivated by algorithms used in practice \cite{Linden1}, and has lower complexity compared to those in the above mentioned references.
\item The algorithm has competitive empirical performance on real datasets
such as the Movielens data \cite{MovieLens}. For example, next we compare the
algorithm with OptSpace \cite{Montanari1} on Movielens data. While OptSpace uses
ratings on the scale 1-5 given by Movielens, in our algorithm we
quantize the ratings as follows: 4,5 are mapped to 1, while 1-3 are
mapped to 0. (Similarly, the output of OptSpace is quantized to
$\{0,1\}$.) We find that the local algorithm yields a BER of 0.091,
while on the same test data, OptSpace gives a BER of 0.107. Thus the
performance of both algorithms is similar. (More detailed simulation
results will be presented in a future publication.)
\end{itemize}

%To motivate the usefulness of the local algorithm, we compare its performance to that of OptSpace \cite{Montanari1} on the MovieLens data \cite{MovieLens}. To fit in the binary setting we quantize the data, in the sense that 4, 5 are mapped to 1 (`recommended'), and 1, 2, 3 are mapped to 0 (`not recommended'). We hide 30\% of the ratings for each user, and evaluate the performance when the local algorithm recommends an item from these hidden entries. We observe that the local algorithm shows competitive performance with a BER of 0.091, compared to OptSpace which shows a BER of 0.107. (A more detailed comparison will be provided in the journal version.)

In this paper, we seek to understand the reason for the competitive
performance of the relatively simple local
algorithm by analyzing its BER for the model proposed in \cite{Aditya1}.  Suppose that the matrix is of size $n \times n$ and the erasure probability $\epsilon = 1-c/n^{\alpha}$. If  $\alpha \in [0,1/2)$, then our main result says that if the cluster size is greater than $n^{\alpha -\gamma_n}$ where $\gamma_n \rightarrow 0$, then the BER approaches 0, but if the cluster size is less than $n^{\alpha -\gamma}$, $\gamma >0$, the BER is bounded away from zero and a lower bound is  obtained in terms of the observation noise and $\gamma$. For $\alpha >1/2$, BER always approaches 1/2. Due to space constraints we only
provide an outline of the proofs; the details with additional results
will be reported in a journal submission.

The rest of the paper is organized as follows. In Section \ref{sec:basic_setup}, we describe our model, the local popularity algorithm, and establish notation. The main results are stated and discussed in Section \ref{sec:main}. The proof of the main result is given in \ref{sec:proof_main} and some related lemmas are established in \ref{sec:proof_lemma}. The conclusion of given in Section \ref{sec:con}.

\section{Basic Setup}
\label{sec:basic_setup}
In Section \ref{subsec:model} we describe our model, and discuss a local popularity based algorithm in Section \ref{subsec:local}.
\subsection{The Model}
\label{subsec:model}

We consider an $n\times n$ rating matrix $\mathbf X$ whose entries are binary. The rows of the matrix represent users and the columns represent items. Suppose $\mathcal A=\{A_i\}_{i=1}^r$ and $\mathcal B=\{B_i\}_{i=1}^r$ are row and column partitions respectively, representing sets of similar users and items. We assume that for all $i=1,\dots,r$ we have $|A_i|=|B_i|=k$. The sets $A_i\times B_j$ are the clusters of the matrix and they are unknown. If $(p,q)\in A_i\times B_j$, then $\mathbf X(p,q)=\xi_{ij}$ where $\xi_{ij}$ are i.i.d. Bernoulli(1/2). This matrix $\mathbf X$ is passed through a memoryless binary  symmetric channel (BSC) with parameter $p$, and then through an erasure channel with each entry being erased independently with probability $\epsilon$.
% , i.e., 
%$$\mathbf X \buildrel BSC(p)\over \longrightarrow  \mathbf Z  \buildrel erasure(\epsilon) \over \longrightarrow \mathbf Y.$$
 The erasures characterize the missing entries in a rating matrix, while the BSC characterizes the noisy behaviour of the users. The entries of the  observed matrix $\mathbf Y$ are from $\{0,1,*\}$, where $*$ denotes an erased entry.

We consider the case of binary entries and uniform cluster size is for simplicity, and like in \cite{Aditya2}, these can  be relaxed. For more detailed motivation of this model, we refer to \cite{Aditya1},\cite{Aditya2}.

\subsection{A Local Popularity Algorithm}
\label{subsec:local}
Without loss of generality  suppose the first row belongs to $A_1$. Upon observing $\mathbf Y$, we want to recommend an item (a column) to the user 1.  In this paper we study a particular ``local'' algorithm, which only uses pairwise row correlations.
Let the number of commonly sampled entries between two rows (similarity)
$s_{ij}:=\sum_{k=1}^n\mathbf 1_{\{\mathbf Y(i,k)\neq *\}}\cdot \mathbf 1_{\{\mathbf Y(j,k)\neq *\}}\cdot\mathbf  1_{\{\mathbf Y(i,k)=\mathbf Y(j,k)\}},$
where $\mathbf 1_{\{.\}}$ denotes the indicator function.
We use the following local algorithm ($\texttt{local\_algo}(T)$) to recommend an item $j_0$ to user 1.

%\begin{algorithm}
%\label{local_algo}
%  \caption{\texttt{local\_algo}($T$)}
%  \begin{algorithmic}
%\STATE    {\bf Step 1:} ({\bf Select the top $T$ nearest rows}) Compute $s_{1i}$, for 
%      $i=1,2,\dots,n$. Select the top $T$ rows with the highest values of similarity, where $T$ is a parameter whose choice is discussed later. 
%   \STATE {\bf Step 2:} ({\bf Pick the most popular column}) Among the columns $j$ such that $\mathbf Y(1,j)=*$, select the column having maximum number of 1's among the top $T$ neighbors. Break ties randomly. 
%\end{algorithmic}
%\end{algorithm}

%  \caption{\texttt{local\_algo}($T$)}
\vspace{0.03in}
\noindent
\fbox{\parbox{3.4in}{ 
\texttt{local\_algo($T$):}
\begin{enumerate}
\item ({\bf Select the top $T$ nearest rows}) Compute $s_{1i}$, for 
      $i=1,2,\dots,n$. Select the top $T$ rows with the highest values of similarity, where $T$ is a parameter whose choice is discussed later. 
      
\item ({\bf Pick the most popular column}) Among the columns $j$ such that $\mathbf Y(1,j)=*$, select the column having maximum number of 1's among the top $T$ neighbors. Break ties randomly. 
\end{enumerate}
}}
\vspace{0.01in}\\
Suppose we represent each row by a vertex in a graph with an edge between vertex $i$ and $j$ iff $s_{ij}>0$. Then to recommend an item to user 1, the above algorithm depends only on the rows neighboring to user 1, and chooses the most popular item among the top few neighbors. Hence we use the adjective ``local popularity''.
We study the probability of error for this algorithm, denoted as  $P_e[\texttt{local\_algo}(T)]:=Pr[\mathbf X(1,j_0)=0]$.

\section{Main Result}
\label{sec:main}

From the results in \cite{Aditya2}, it follows that for $k > c_1n^\alpha \log n$, $\alpha \in [0,1/2)$, with high probability we can recover the entire matrix $\mathbf X$ using a ``local'' algorithm, and hence the BER also approaches zero. In the following theorem, we establish a stronger result for $\texttt{local\_algo}(T)$.
\begin{theorem}
  \label{thm:01}
Suppose $\alpha\in [0,1/2)$ and $c>0$.  Assume that the erasure probability $\epsilon =1-\frac{c}{n^\alpha}$,  the BSC error probability  $p\in [0,1/2)$, and $r$ goes to infinity with $n$. 
\begin{itemize}
\item (\textbf {Large cluster size}) If there exists a sequence $\gamma_n\ge 0$ such that $\gamma_n \rightarrow 0$ and $k \ge n^{\alpha-\gamma_n}$, then $P_e[\texttt{local\_algo}(k)]\rightarrow 0$ as $n\rightarrow \infty$.
\item (\textbf {Small cluster size}) If there is a constant $\gamma > 0$ such that $k \le n^{\alpha -\gamma}$,  then 
$$\lim \inf_{n\rightarrow \infty} P_e[\texttt{local\_algo}(k)]  \ge \frac{p^{\left\lfloor \frac{1}{\gamma}\right\rfloor}}{p^{\left\lfloor \frac{1}{\gamma}\right\rfloor} + (1-p)^{\left\lfloor \frac{1}{\gamma}\right\rfloor}}.$$
%there is a constant $c(\gamma,p)>0$ such that $P_e[\texttt{local\_algo}(k)] \ge c(\gamma,p)$.
\end{itemize}
\end{theorem}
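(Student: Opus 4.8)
The plan is to decompose the analysis of $\texttt{local\_algo}(k)$ into its two stages and to reduce each to a concentration statement. First I would record the two agreement probabilities that drive the similarity scores: for two rows in the same row-block the observed entries in a commonly-sampled column agree with probability $q := p^2 + (1-p)^2 > 1/2$, whereas for rows in different blocks they agree with probability $1/2$. Since an entry is observed with probability $c/n^\alpha$ and a column is commonly sampled with probability $c^2/n^{2\alpha}$, the conditional mean of $s_{1i}$ given the observation pattern of row $1$ is $\Theta(n^{1-2\alpha})$ with a same-block/cross-block gap of $\Theta\big((q-\tfrac12)n^{1-2\alpha}\big)$. Because $\alpha<1/2$ this gap grows, while the fluctuation of each $s_{1i}$ is $O(\sqrt{n^{1-2\alpha}\log n})$; conditioning on row $1$'s observed pattern makes the scores $\{s_{1i}\}_{i\ge 2}$ independent, so a Hoeffding/Bernstein bound plus a union bound over the $n$ rows shows that, with high probability, the top $k$ similarities are exactly the rows of $A_1$ (row $1$ itself has the largest score $s_{11}=\Theta(n^{1-\alpha})$ and is also selected).

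Granting that Stage 1 returns the block $A_1$, I would reduce Stage 2 to a pure counting problem. Conditioned on the neighbour set, the count of $1$'s among the neighbours at an erased column lying in a value-$1$ column-block is $\mathrm{Binomial}(k-1,(c/n^\alpha)(1-p))$ and in a value-$0$ block is $\mathrm{Binomial}(k-1,(c/n^\alpha)p)$, and these counts are independent across columns given the neighbour set. Writing $\mu_1$ and $\mu_0$ for the two means, their ratio is the constant $\rho:=(1-p)/p>1$ (using $p<1/2$), and since $k(c/n^\alpha)$ is small a Poisson approximation applies. The recommended column is the $\arg\max$ of these counts over erased columns, so the error event $\{\mathbf X(1,j_0)=0\}$ is exactly the event that the maximiser falls in a value-$0$ block.

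The crux is to locate the maximal attained count $M^\ast$ and to weigh value-$1$ against value-$0$ columns at that level. A first/second-moment computation identifies $M^\ast$ as the largest integer $m$ with $\tfrac{n}{2}\,\mu^m/m!\gtrsim 1$, and the posterior odds that a column attaining level $m$ sits in a value-$1$ block is $\asymp \rho^{m}$. In the large-cluster regime $k\ge n^{\alpha-\gamma_n}$ with $\gamma_n\to 0$, this threshold analysis gives $M^\ast\to\infty$, so the odds $\rho^{M^\ast}\to\infty$ and the maximiser lies in a value-$1$ block with probability tending to $1$; hence $P_e[\texttt{local\_algo}(k)]\to 0$. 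In the small-cluster regime $k\le n^{\alpha-\gamma}$ with $\gamma>0$ fixed, one has $\mu_0,\mu_1=\Theta(n^{-\gamma})$, so the threshold pins $M^\ast$ at the constant $\big\lfloor 1/\gamma\big\rfloor$ and both value-$0$ and value-$1$ blocks attain it (the numbers of each grow like $n^{1-\gamma\lfloor 1/\gamma\rfloor}$ with ratio tending to $\rho^{\lfloor 1/\gamma\rfloor}$). With uniform tie-breaking the maximiser is value-$0$ with probability tending to
$$\frac{\mu_0^{\lfloor 1/\gamma\rfloor}}{\mu_0^{\lfloor 1/\gamma\rfloor}+\mu_1^{\lfloor 1/\gamma\rfloor}}=\frac{p^{\lfloor 1/\gamma\rfloor}}{p^{\lfloor 1/\gamma\rfloor}+(1-p)^{\lfloor 1/\gamma\rfloor}},$$
which yields the claimed lower bound on $\liminf_n P_e$.

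I expect the main obstacle to be the sharp determination of $M^\ast$ together with the counts of columns attaining it: this needs Poisson-tail estimates that are uniform over the $\Theta(n)$ erased columns, a second-moment argument to show the number of maximal columns of each type concentrates, and care at the boundary case $1/\gamma\in\mathbb{Z}$ (where the integer level can slip by one and one must check the lower-order terms). A secondary technical point is decoupling: the per-column counts share the single random neighbour set, which I would handle by working on the high-probability event that this set equals $A_1$, after which the counts are genuinely independent and the tie-breaking can be analysed as a uniform draw among the maximal columns.
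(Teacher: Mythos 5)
Your Stage~1 matches the paper's: both establish that the $k$ highest similarity scores come from $A_1$ by comparing the mean $np_g=\Theta(n^{1-2\alpha})$ of same-block overlaps against the mean $np_b$ of cross-block overlaps and applying Chernoff plus a union bound. Your Stage~2, however, is a genuinely different route. The paper never analyzes which column wins the arg-max directly; instead it conditions on the observed vector $\bar y=\mathbf Y_k(:,j_{max})$ of the winning column, uses the Markov structure to write the error as the posterior $p^{|\bar y|_1-|\bar y|_0}/(p^{|\bar y|_1-|\bar y|_0}+(1-p)^{|\bar y|_1-|\bar y|_0})$, and then proves two lemmas: that $|\bar y|_1$ is large and that $|\bar y|_1-|\bar y|_0\to\infty$ (large clusters), respectively that $|\bar y|\le\lfloor 1/\gamma\rfloor$ (small clusters). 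You instead run an extremal-statistics argument on the per-column counts of $1$'s, identify the maximal level $M^\ast$, and compare the populations of value-$0$ and value-$1$ columns attaining it. Your approach buys a cleaner interpretation (and, in the small-cluster case, would give an asymptotic value rather than only a lower bound, modulo the integer-$1/\gamma$ boundary you flag), at the cost of needing second-moment concentration for the number of maximal columns; the paper's posterior argument sidesteps all of that because the bound $p^x/(p^x+(1-p)^x)$ is monotone in $x=|\bar y|_1-|\bar y|_0$ and only crude control of $x$ is needed. Note also that for $k\le n^{\alpha-\gamma}$ (not equality) you should invoke this monotonicity to pass from $\lfloor 1/\gamma'\rfloor$ for the effective exponent $\gamma'\ge\gamma$ down to the stated bound at $\lfloor 1/\gamma\rfloor$.

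Two technical points in your outline need repair. First, conditioning on row $1$'s observation pattern does \emph{not} make the $s_{1i}$, $i\notin A_1$, into independent binomials with success probability $\tfrac12(1-\epsilon)$ per commonly sampled entry: rows in a foreign block share the block values $\xi_{ij}$, so each $s_{1i}$ is a mixture of binomials indexed by how many of the $r$ column blocks agree with row $1$'s block. This is exactly why the paper's Lemma~2 carries the extra term $2re^{-r\delta^2/6}$ and why the hypothesis $r\to\infty$ is needed; your union bound must condition on (or concentrate) the block-agreement count first. Second, your Poisson approximation ``since $k(c/n^\alpha)$ is small'' fails in the upper part of the large-cluster regime: the hypothesis $k\ge n^{\alpha-\gamma_n}$ allows $k=n^\alpha g_n$ with $g_n\to\infty$, where the per-column mean $cg_n(1-p)$ diverges and the count maxima live in a Gaussian/moderate-deviations window around $\mu_Y+\Theta(\sigma_Y t_n)$ rather than at a Poisson threshold $M^\ast$ determined by $n\mu^m/m!\asymp 1$. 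The paper handles this by splitting Lemma~3 into the two cases and importing a moderate-deviations theorem for binomials; your argument still goes through there (the value-$1$ maxima dominate the value-$0$ maxima at the large-deviations scale because the rate function is monotone in the mean), but it requires a separate computation from the Poisson one you describe.
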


In Theorem \ref{thm:01} we restrict ourselves to $\alpha \in [0, 1/2)$. For $\alpha < 1/2$, as we show in Section \ref{sec:proof_main}, all the rows picked by Step 1 of the algorithm are from $A_1$ (``good'') with high probability. But, for $\alpha >1/2$,  most of the rows picked are from outside $A_1$ (``bad''), and hence the algorithm breaks down. Due to lack of space, the results for $\alpha > 1/2$ will be presented in subsequent publications.
In the rest of this paper, we present a proof of Theorem \ref{thm:01}. 

% Finally for a matrix $\mathbf M$,  $|\mathbf M(:,j)|$ represents the number of 1's in the $j$th column of $\mathbf M$.

\section{Proof of Theorem \ref{thm:01}}
\label{sec:proof_main}
In this section we present the proof of Theorem \ref{thm:01}.  To begin with, we introduce some notation.

\noindent
{\bf Notation:}
By $X\sim B(n,p)$ we mean that a random variable $X$ is binomially distributed with parameters $n$ and $p$. 
For a real valued function $f(n)$, by $\Omega(f(n)), \Theta(f(n))$ and $o(f(n))$ we represent the standard asymptotic order notation (see for example\cite[p. 433]{motwani95}).
%For two real valued functions $f(n)$ and $g(n)$, if there exist  positive reals $M$ and $n_0$ such that $|f(n)| \le M |g(n)|$ for all $n > n_0$, then we denote $f(n)=O(g(n))$ and $g(n)=\Omega(f(n))$. If $f(n)=O(g(n))$ and $f(n)=\Omega(g(n))$ then we say $f(n)=\Theta(g(n))$. We say $f(n)=o(g(n))$ if $\lim_{n\rightarrow \infty} \frac{f(n)}{g(n)}=0$, and  
We say that $f(n) \doteq g(n)$ if $\lim_{n\rightarrow \infty} \frac{f(n)}{g(n)}=1$. For a matrix $\mathbf X$, $\mathbf X(:,j)$ denotes the $j$th column of $\mathbf X$. For a vector $\bar y\in \{0,1,*\}^n$, $|\bar y|_0$, $|\bar y|_1$ and $|\bar y|$ represent number of 0's, number of 1's and the total number of 0's and 1's respectively. For a sequence of events $\{E_n\}$, if $P[E_n]\rightarrow 1$ with $n$, then we say that  $E_n$ occurs w.h.p..

\noindent
{\bf Analysis of Step 1 of the algorithm:}
%Since $\alpha < 1/2$, we assume that  $\alpha = \frac{1}{2}-\beta$, for $\beta \in (0,1/2]$.
We show that  w.h.p. the top $k$ rows are all from $A_1$.
We observe that for $i\in A_1\backslash\{1\}$, $s_{1i} \sim B(n, p_g)$ with  $p_g:=(1-\epsilon)^2[(1-p)^2+p^2]$. For $i\not \in A_1$ we observe that $s_{1i}$ is a mixture of binomials with $\mathbb E[s_{1i}]=np_b$ for $p_b:=\frac{(1-\epsilon)^2}{2}< p_g$. 
We omit the proofs of the following two lemmas, which are consequences of the Chernoff bound \cite[Theorem 1.1]{Dubhashi1} together with a union bound.
\begin{lemma}[{ \bf Overlap with ``good'' rows}]
  \label{lemma:01}
For $\delta\in (0,1)$, we have 
$$Pr\big[\min_{i\in A_1} s_{1i} \le np_g(1-\delta)\big] \le ke^{-np_g\delta^2/3}=:p_1.$$
\end{lemma}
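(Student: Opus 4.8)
The plan is to reduce the deviation of the minimum to a single-row tail estimate by a union bound, and then control each individual tail with the Chernoff bound cited just before the lemma. First I would pin down the exact distribution of $s_{1i}$ for a fixed $i\in A_1\setminus\{1\}$. Since rows $1$ and $i$ lie in the same row cluster, for every column $\ell\in B_j$ the underlying entries agree, $\mathbf X(1,\ell)=\mathbf X(i,\ell)=\xi_{1j}$. The $\ell$-th indicator in $s_{1i}$ equals $1$ precisely when neither of the two entries is erased and the two independent BSC outputs coincide. Conditioned on the cluster value $\xi_{1j}$, this probability is $(1-\epsilon)^2[(1-p)^2+p^2]=p_g$, \emph{regardless} of whether $\xi_{1j}=0$ or $1$. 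Hence each column indicator is independent of the cluster variables, and because distinct columns draw on disjoint erasure and BSC randomness the indicators are mutually independent across $\ell$. This yields $s_{1i}\sim B(n,p_g)$ with $\mathbb E[s_{1i}]=np_g$, the fact already recorded in the excerpt.

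With the distribution in hand, I would apply the multiplicative lower-tail Chernoff bound of \cite[Theorem 1.1]{Dubhashi1} to a single good row: for $\delta\in(0,1)$,
$$Pr\big[s_{1i}\le np_g(1-\delta)\big]\le e^{-np_g\delta^2/3}.$$
(The sharpest textbook constant is $\delta^2/2$; the weaker $\delta^2/3$ in the statement is certainly implied and lets one quote a single two-sided form of the bound.) Taking a union bound over the at most $k$ rows of $A_1$ — each distributed as above, with the self-term $i=1$ contributing only a smaller tail since $\mathbb E[s_{11}]=n(1-\epsilon)\ge np_g$ — immediately gives
$$Pr\Big[\min_{i\in A_1}s_{1i}\le np_g(1-\delta)\Big]\le k\,e^{-np_g\delta^2/3},$$
which is the claimed quantity $p_1$.

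The computation is short once the distributional claim is granted, so the only point genuinely needing care is the independence argument in the first step: one might worry that columns belonging to a common block $B_j$ are coupled through the shared value $\xi_{1j}$. The resolution is precisely that the per-column agreement probability $p_g$ does not depend on $\xi_{1j}$, which decouples the indicators from the cluster variables entirely. Everything else is a routine combination of the Chernoff tail estimate and a union bound, exactly as flagged in the remark preceding the lemma.
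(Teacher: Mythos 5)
Your proof is correct and follows exactly the route the paper indicates: the paper omits the proof but states that the lemma is a consequence of the Chernoff bound of \cite[Theorem 1.1]{Dubhashi1} together with a union bound, which is precisely your argument (identify $s_{1i}\sim B(n,p_g)$ for $i\in A_1\setminus\{1\}$, apply the multiplicative lower tail, union bound over the $k$ rows). Your care over the decoupling of the per-column indicators from the cluster variables $\xi_{1j}$ and over the self-term $i=1$ fills in details the paper takes for granted, but the approach is the same.
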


\begin{lemma}[{\bf Overlap with ``bad'' rows}]
  \label{lemma:02}
For $\delta\in (0,1)$, we have
  $$Pr\big[\max_{i\not\in A_1}s_{1i} \ge np_b(1+\delta)^2\big] \le(n-k)e^{-\frac{np_b \delta^2}{3}}+2re^{-\frac{r\delta^2}{6}}=:p_2.$$
\end{lemma}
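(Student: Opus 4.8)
The plan is to confront the one genuine complication hidden in the statement of Lemma~\ref{lemma:02}: for a bad row $i\in A_m$ with $m\neq 1$, the similarity $s_{1i}$ is \emph{not} a single binomial but a mixture, the mixing being governed by how many of the $r$ column clusters carry the same underlying bit for clusters $A_1$ and $A_m$. First I would expose this structure explicitly. Writing $Z_m:=\sum_{j=1}^{r}\mathbf 1_{\{\xi_{1j}=\xi_{mj}\}}$ for the number of agreeing column clusters, note that $Z_m\sim B(r,1/2)$ and that, \emph{conditioned} on $Z_m=z$, the variable $s_{1i}$ is a sum of $n$ independent Bernoullis: the $zk$ columns in agreeing clusters succeed with probability $(1-\epsilon)^2[(1-p)^2+p^2]$, and the remaining $(r-z)k$ succeed with probability $(1-\epsilon)^2\,2p(1-p)$. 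Averaging over $Z_m$ and using $rk=n$ recovers $\mathbb E[s_{1i}]=np_b$. A key structural remark is that all bad rows lying in the \emph{same} cluster $A_m$ share the \emph{same} mixing variable $Z_m$, so there are at most $r-1$ distinct mixing variables to control, not $n-k$ of them.

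Next I would split the target event according to the mixing variables. Let $\mathcal G$ be the event that $Z_m\le \tfrac r2(1+\delta)$ for every $m=2,\dots,r$. Since each $Z_m\sim B(r,1/2)$ has mean $r/2$, the Chernoff bound of \cite[Theorem 1.1]{Dubhashi1} gives $\Pr[Z_m\ge \tfrac r2(1+\delta)]\le e^{-r\delta^2/6}$; a union bound over the (at most $r$) bad clusters, using the two-sided form of the inequality, yields $\Pr[\mathcal G^c]\le 2re^{-r\delta^2/6}$, which is exactly the second term defining $p_2$.

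On $\mathcal G$ I would then reduce each conditional law to a single controlled binomial. Because $(1-p)^2+p^2>2p(1-p)$, the conditional distribution of $s_{1i}$ given $Z_m=z$ is \emph{stochastically increasing} in $z$; hence on $\mathcal G$ the variable $s_{1i}$ is stochastically dominated by the sum of independent Bernoullis taken at $z=\tfrac r2(1+\delta)$, whose mean is
$$\mu^\star=np_b\bigl[1+\delta(1-2p)^2\bigr]\le np_b(1+\delta),$$
using $(1-2p)^2\le 1$ for $p\in[0,1/2)$. Since the target threshold satisfies $np_b(1+\delta)^2\ge \mu^\star(1+\delta)$ and $\mu^\star\ge np_b$, a further Chernoff estimate gives $\Pr[s_{1i}\ge np_b(1+\delta)^2\mid \mathcal G]\le e^{-\mu^\star\delta^2/3}\le e^{-np_b\delta^2/3}$. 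A union bound over the $n-k$ bad rows, carried out inside the event $\mathcal G$, produces the first term $(n-k)e^{-np_b\delta^2/3}$, and adding $\Pr[\mathcal G^c]$ completes the bound. It is worth noting that the square in $(1+\delta)^2$ is precisely the composition of two concentration steps: one factor $(1+\delta)$ absorbs the fluctuation of the mixing variable $Z_m$, and the second absorbs the fluctuation of the conditional binomial.

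I expect the main obstacle to be the mixture structure itself: one cannot apply a Chernoff bound to $s_{1i}$ directly, and the crux is the conditioning-plus-monotonicity argument that, on the high-probability event $\mathcal G$, replaces the intractable mixture by a single binomial whose mean is provably below $np_b(1+\delta)$. Everything else---the two Chernoff estimates and the two union bounds (one over clusters, one over rows)---is routine.
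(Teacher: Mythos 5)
Your argument is correct and is precisely the Chernoff-plus-union-bound proof the paper alludes to (it omits the details): the second term of $p_2$ controls the $r$ cluster-level mixing variables $Z_m$, the first term handles the $n-k$ bad rows after stochastic domination by the worst-case conditional binomial, and the $(1+\delta)^2$ threshold arises from composing the two concentration steps exactly as you describe. The reconstruction of the mixture structure and the monotonicity reduction are the right (and intended) way to fill the gap.
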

%\begin{proof}
%The proof is given in Section \ref{proof:lemma:02}.
%\end{proof}

Since $p_g> p_b$, we can choose a small enough constant $\delta_0$ such that $np_g(1-\delta_0) > np_b(1+\delta_0)^2  $.
Let $E_1$ denote the event that there is an error in Step 1 of the algorithm, i.e., we choose some rows from outside $A_1$ in the top $k$ users. 
%This happens only if $\min_{i\in A_1} s_{1i} \le \max_{i\not\in A_1} s_{1i}$. 
Using Lemma \ref{lemma:01} and Lemma  \ref{lemma:02} we obtain
\begin{align}
\label{eq:1}    Pr[E_1] & \le Pr\left[ \min_{i\in A_1} s_{1i} \le \max_{i\not\in A_1} s_{1i}\right]
 \le p_1 +p_2 \buildrel (a)\over = o(1).
%\nonumber    & \le k e^{-np_g\delta_0^2/3} + (n-k) e^{-np_b\delta_0^2/3} + 2r e^{-r\delta_0^2/6}\\
%\nonumber    & \buildrel (a) \over= k e^{-c_1n^{2(1-2\alpha)}} + (n-k) e^{-c_2 n^{2(1-2\alpha)}} + 2r e^{-c_3 r}\\
%\label{eq:1} &\buildrel (b) \over  =o(1).
  \end{align}
  Here (a)   follows since $np_g=\Theta(np_b)=\Theta(n^{1-2\alpha})$,  and $r$ increases to infinity with $n$. This implies that w.h.p. Step 1 of $\texttt{local\_algo}$ does not contribute to the error.

\noindent{\bf Analysis of Step 2 of the algorithm:}  We assume that Step 1 picks all the $k$ ``good'' neighbors. (i.e., we condition on the event $E_1^c$.) 

\noindent {\bf Large cluster size:}
Suppose $k\ge n^{\alpha -\gamma_n}$ for $\gamma_n=o(1)$. Let $j_{max}$ denote the most popular column chosen by \texttt{local\_algo}($k$), and suppose $\mathbf X_k$ and $\mathbf Y_k$ denotes the matrices $\mathbf X$ and $\mathbf Y$ respectively, restricted to the top $k$ rows.  Since we have conditioned on $E_1^c$, we observe that for a column $j$ such that $\mathbf X(1,j)=1$, we  have $|\mathbf Y_k(:,j)|_1\sim B(k,(1-\epsilon)(1-p))$. Define $\mu_Y:=\mathbb E[|\mathbf Y_k(:,j)|_1 ]$ and $\sigma_Y^2:=Var(|\mathbf Y_k(:,j)|_1 )$ to obtain the following two lemmas.
\begin{lemma}[{\bf Many 1's in the most popular column}]
 \label{lemma:101}
 For different values of $k$, we have the following lower bounds on $|\mathbf Y_k(:,j_{max})|_1$.
 \begin{enumerate}
 \item If $k=n^{\alpha-\gamma_n}$ such that $\gamma_n\ge 0$ and $\gamma_n\rightarrow 0$, then w.h.p. $|\mathbf Y_k(:,j_{max})|_1\ge \min\{\sqrt{\log n}, \frac{1}{2\gamma_n}\} =:t_1(n)$.
 \item If $k=n^\alpha g_n$ for $g_n \ge 1$, then w.h.p. $|\mathbf Y_k(:,j_{max})|_1\ge \max\{\mu_Y+\min\{\sigma_Y^{1/4},\sqrt{\log n}\} \sigma_Y,\sqrt{\log n}\}=:t_2(n)$.
 \end{enumerate}
\end{lemma}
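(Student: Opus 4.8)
The plan is to condition throughout on the event $E_1^c$ that Step~1 returns exactly the $k$ rows of $A_1$, which by \eqref{eq:1} holds w.h.p., and then to lower bound $|\mathbf Y_k(:,j_{max})|_1$ by the maximum over a large family of well-understood columns. Call a column $j$ a \emph{candidate} if $\mathbf X(1,j)=1$ and $\mathbf Y(1,j)=*$, and write $\mathcal C$ for the set of candidates. Since $j_{max}$ maximizes the number of $1$'s over all columns erased for user~$1$, we have $|\mathbf Y_k(:,j_{max})|_1 \ge \max_{j\in\mathcal C}|\mathbf Y_k(:,j)|_1$, so it suffices to bound the right-hand side. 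First I would show $|\mathcal C|=\Theta(n)$ w.h.p.: the number of column blocks with $\xi_{1,l}=1$ is $B(r,1/2)$ and concentrates at $r/2$, giving $\doteq n/2$ columns with $\mathbf X(1,j)=1$, of which a fraction $\epsilon\to 1$ are erased for user~$1$, so $|\mathcal C|\doteq \epsilon n/2 \doteq n/2$. For each $j\in\mathcal C$ the true entries of the top $k$ rows all equal $1$, the $(1,j)$ entry itself is erased, and the channels are memoryless, so the counts $\{|\mathbf Y_k(:,j)|_1\}_{j\in\mathcal C}$ are independent with common law $B(k,(1-\epsilon)(1-p))$ (up to the harmless replacement of $k$ by $k-1$).

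After this reduction the problem becomes a pure max-of-i.i.d.\ estimate. Conditioning on the (w.h.p.) event $|\mathcal C|=\Theta(n)$ and writing $q_t:=Pr[B(k,(1-\epsilon)(1-p))\ge t]$, independence gives
\[
Pr\Big[\max_{j\in\mathcal C}|\mathbf Y_k(:,j)|_1 < t\Big]=(1-q_t)^{|\mathcal C|}\le \exp(-|\mathcal C|\,q_t),
\]
so it is enough to exhibit, for the relevant threshold $t$, the divergence $|\mathcal C|\,q_t\to\infty$; this immediately yields the claimed w.h.p.\ lower bound.

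For part~(1), $k=n^{\alpha-\gamma_n}$ gives mean $\mu_Y=k(1-\epsilon)(1-p)=c(1-p)n^{-\gamma_n}$, which stays bounded, so this is a Poisson-like regime and I would use the elementary lower bound $q_t\ge \binom{k}{t}\big((1-\epsilon)(1-p)\big)^t(1-(1-\epsilon)(1-p))^{k-t}\gtrsim \mu_Y^t/t!$. Taking logarithms, using $|\mathcal C|\doteq n/2$ and Stirling, gives
\[
\log\big(|\mathcal C|\,q_t\big)\gtrsim (1-t\gamma_n)\log n - t\log t + O(t),
\]
and the point is to verify that at $t=t_1(n)=\min\{\sqrt{\log n},\tfrac{1}{2\gamma_n}\}$ the right-hand side diverges: if $\gamma_n\le \tfrac1{2\sqrt{\log n}}$ then $t=\sqrt{\log n}$, so $t\gamma_n\le\tfrac12$ and $t\log t=o(\log n)$; if $\gamma_n>\tfrac1{2\sqrt{\log n}}$ then $t=\tfrac1{2\gamma_n}<\sqrt{\log n}$, so $t\gamma_n=\tfrac12$ and again $t\log t=o(\log n)$. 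In both cases the leading term is $\tfrac12\log n\to\infty$, which is exactly the calculation producing the two competing thresholds $\sqrt{\log n}$ and $\tfrac1{2\gamma_n}$.

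For part~(2), $k=n^\alpha g_n$ with $g_n\ge1$ gives $\mu_Y=c(1-p)g_n\ge c(1-p)$ and $\sigma_Y^2\doteq\mu_Y$. When $g_n\to\infty$ the binomial is well approximated by a Gaussian, and I would estimate the tail at $t=\mu_Y+s\sigma_Y$ by a moderate-deviation (Berry--Esseen type) bound $q_t\approx \bar\Phi(s)$, which behaves like $s^{-1}e^{-s^2/2}$. The choice $s=\min\{\sigma_Y^{1/4},\sqrt{\log n}\}$ keeps $s=o(\sigma_Y^{1/3})$, precisely the window in which the normal approximation of the binomial tail remains valid, and in either subcase one checks $e^{-s^2/2}\ge n^{-1/2}$ up to lower-order factors, so $|\mathcal C|\,q_t\to\infty$ and $\max_{j\in\mathcal C}|\mathbf Y_k(:,j)|_1\ge \mu_Y+s\sigma_Y$ w.h.p. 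The remaining lower bound $\sqrt{\log n}$ inside $t_2(n)$ follows from the part-(1) argument applied with $\gamma_n=0$ (legitimate since $k\ge n^\alpha$), and taking the larger of the two gives $t_2(n)$. I expect the main obstacle to be controlling $q_t$ sharply enough in each regime --- the Poisson lower bound with the Stirling bookkeeping in part~(1), and above all the uniform validity of the Gaussian tail estimate over the moderate-deviation window $s\le\sigma_Y^{1/4}$ in part~(2), which is what forces the unusual exponent $1/4$ in the statement.
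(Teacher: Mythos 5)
Your proposal follows essentially the same route as the paper's proof: reduce to the maximum of $\Theta(n)$ i.i.d.\ $B(k,(1-\epsilon)(1-p))$ counts over columns with $\mathbf X(1,j)=1$, lower-bound the relevant tail probability $q_t$ by an elementary binomial (Poisson-type) estimate at $t=t_1(n)$ in part (1) and by a moderate-deviations normal approximation at $t_n=\min\{\sigma_Y^{1/4},\sqrt{\log n}\}$ in part (2), and conclude via $(1-q_t)^{\Theta(n)}\le e^{-\Theta(n)q_t}\to 0$. Your one deviation --- restricting the candidate columns to those with $\mathbf Y(1,j)=*$ and replacing $k$ by $k-1$ --- is a harmless (indeed slightly more careful) refinement of the paper's set $S$, which only requires $\mathbf X(1,j)=1$.
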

\begin{proof}
The proof is given in Section \ref{proof:lemma:101}
\end{proof}

\begin{lemma}[{\bf 1's form majority in the most popular column}]
  \label{lemma:102}
Let $j_{max}$ be the most popular column. Then w.h.p.
$|\mathbf Y_k(:,j_{max})|_1 -|\mathbf Y_k(:,j_{max})|_0$ increases to $\infty$ with $n$.
% In words, number of  1's form a majority over 0's in the most popular column.
\end{lemma}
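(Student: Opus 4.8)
The plan is to control $|\mathbf Y_k(:,j_{max})|_1$ and $|\mathbf Y_k(:,j_{max})|_0$ separately and show their difference diverges. Lemma~\ref{lemma:101} already gives $|\mathbf Y_k(:,j_{max})|_1\ge t(n)$ w.h.p.\ with $t(n)\to\infty$ (where $t(n)=t_1(n)$ or $t_2(n)$ depending on $k$), so writing $N_1:=|\mathbf Y_k(:,j_{max})|_1$ and $N_0:=|\mathbf Y_k(:,j_{max})|_0$ it suffices to upper bound $N_0$ by an amount that is smaller than $t(n)$ by a diverging margin. The obstacle is that $j_{max}$ is chosen by maximizing the count of observed $1$'s, so $N_0$ cannot be treated as an independent binomial.

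To decouple the selection from the $0$-count, I would condition on the latent column labels $\{\mathbf X(1,j)\}$ together with the whole pattern of observed $1$'s, namely the indicators $\{\mathbf 1_{\{\mathbf Y_k(i,j)=1\}}\}_{i,j}$. This information fixes every $|\mathbf Y_k(:,j)|_1$, hence fixes $j_{max}$ and $N_1$, while by cell independence each entry that is not an observed $1$ is conditionally an independent $0$/$*$ draw; thus $N_0$ is conditionally $B\big(k-N_1,\rho\big)$, with $\rho=\frac{(1-\epsilon)p}{(1-\epsilon)p+\epsilon}\doteq(1-\epsilon)p$ when $\mathbf X(1,j_{max})=1$. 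A Chernoff bound then yields that w.h.p.\ $N_0\le k(1-\epsilon)p+\sqrt{k(1-\epsilon)p}\,\omega_n$ for any slowly growing $\omega_n\to\infty$; that is, $N_0\le\mu_Y\frac{p}{1-p}(1+o(1))$ when $\mu:=k(1-\epsilon)\to\infty$ and $N_0=O(1)$ when $\mu$ stays bounded. Combining with Lemma~\ref{lemma:101}, which gives $t(n)\ge\mu_Y$ when $\mu\to\infty$ and $t(n)\to\infty$ while $N_0=O(1)$ when $\mu$ stays bounded, the difference $N_1-N_0$ is at least $\frac{1-2p}{1-p}\mu_Y(1-o(1))$ in the first regime and $t(n)-O(1)$ in the second, both of which diverge.

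The step I expect to be the main obstacle is justifying the use of the good-column probability $\rho\doteq(1-\epsilon)p$, i.e.\ establishing that $\mathbf X(1,j_{max})=1$ w.h.p.; for a bad column the conditional mean of $N_0$ would instead be $\doteq\mu_Y$, of the same order as $t(n)\ge\mu_Y$, and the gap need not diverge. I would prove this by the extreme-value comparison $\max_{\mathbf X(1,j)=1}|\mathbf Y_k(:,j)|_1>\max_{\mathbf X(1,j)=0}|\mathbf Y_k(:,j)|_1$ w.h.p.: the $1$-counts of the $\Theta(n)$ good and bad columns are i.i.d.\ $B(k,(1-\epsilon)(1-p))$ and $B(k,(1-\epsilon)p)$ respectively, and since $(1-\epsilon)(1-p)>(1-\epsilon)p$ the good family stochastically dominates the bad; comparing the binomial upper tails (Chernoff together with a union bound over the columns) shows that the typical maximum over good columns exceeds that over bad columns by a margin that outgrows the $O(1)$ fluctuation of each maximum in the Poisson regime ($\mu$ bounded) and the $O(\sqrt{\mu/\log n})$ fluctuation in the Gaussian regime ($\mu\to\infty$). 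The most delicate case throughout is $\mu\to\infty$, where $N_0$ is genuinely large and only the fine margin $t(n)-\mu_Y\ge\min\{\sigma_Y^{1/4},\sqrt{\log n}\}\sigma_Y$ from Lemma~\ref{lemma:101}, combined with the good-column property, keeps the gap positive and growing; one must check this margin strictly dominates both the $o(\mu_Y)$ correction to the mean and the $\Theta(\sigma_Y)$ fluctuations of $N_0$.
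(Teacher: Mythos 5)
Your first two paragraphs are essentially the paper's argument: condition on the pattern of observed $1$'s so that $j_{max}$ and $N_1$ are fixed while $N_0$ becomes a conditional binomial $B(k-N_1,p_0)$, bound its tail, and combine with Lemma~\ref{lemma:101}. The problem is your third paragraph. You claim the "main obstacle" is establishing $\mathbf X(1,j_{max})=1$ w.h.p., and you propose to prove it by comparing the maxima of the good-column and bad-column $1$-counts. This step is both unnecessary and the weakest link in your plan. It is unnecessary because the bad-column case closes by the very same conditional-binomial bound: when $\mathbf X(1,j_{max})=0$ the conditional success probability is $p_0=\frac{(1-p)(1-\epsilon)}{(1-p)(1-\epsilon)+\epsilon}$, so the conditional mean of $N_0$ is $(k-N_1)p_0\le k(1-p)(1-\epsilon)=\mu_Y$ and its standard deviation is $O(\sigma_Y)$; since Lemma~\ref{lemma:101} gives not merely $N_1\ge\mu_Y$ but $N_1\ge\mu_Y+\min\{\sigma_Y^{1/4},\sqrt{\log n}\}\,\sigma_Y$, a moderate-deviations bound yields $N_0\le\mu_Y+\frac{t_n}{2}\sigma_Y$ w.h.p.\ and the gap $N_1-N_0\ge\frac{t_n}{2}\sigma_Y$ still diverges. (In the bounded-mean regime the bad-column mean of $N_0$ is also $O(1)$, so the same slowly-growing truncation works.) That extra margin in Lemma~\ref{lemma:101} is exactly what handles bad columns; you overlooked it and concluded the gap "need not diverge."

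The detour is also problematic on its own terms. Since $P_e[\texttt{local\_algo}(k)]=Pr[\mathbf X(1,j_{max})=0]$, the claim "$\mathbf X(1,j_{max})=1$ w.h.p." \emph{is} the first part of Theorem~\ref{thm:01}; proving it inside Lemma~\ref{lemma:102} by a separate route means you are re-deriving the theorem's conclusion by extreme-value analysis rather than by the Bayesian posterior argument the lemma is meant to feed. Moreover, the tail comparison you sketch is not routine in the regime $k(1-\epsilon)=O(1)$: there the good and bad maxima both sit at a level $t^*=\min\{\Theta(1/\gamma_n),\Theta(\sqrt{\log n})\}$ that differs between the two families only through the ratio $((1-p)/p)^{t^*}$ against combinatorial factors of size $n^{\Theta(\gamma_n)}$, and a first-moment/union-bound comparison does not obviously separate them by a diverging margin for all $p<1/2$ and all admissible $\gamma_n$. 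You should drop the extreme-value step entirely and instead run your conditional-binomial bound twice, once with each value of $p_0$, using the full strength of the lower bound $t_2(n)=\mu_Y+\min\{\sigma_Y^{1/4},\sqrt{\log n}\}\,\sigma_Y$ in the regime where $k(1-\epsilon)\to\infty$.
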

   \begin{proof}
The proof is given in Section \ref{proof:lemma:102}
\end{proof}

 Now we use Lemma \ref{lemma:101} and Lemma \ref{lemma:102} to prove that the local algorithm makes vanishingly small probability of error. We define $t(k,n):=t_1(n)$ if $k=n^{\alpha-\gamma_n}$ for $\gamma_n\rightarrow 0$, and $t(k,n):=t_2(n)$ if $k=n^\alpha g_n$ for $g_n \ge 1$ (here $t_1(n)$ and $(t_2(n)$ are as defined in Lemma \ref{lemma:101}). Suppose  
$$M:=\{\bar y\in \{0,1,*\}^k: (|\bar y|_1-|\bar y|_0)\rightarrow \infty, \text{ and }|\bar y|_1 \ge t(k,n)\}.$$
 We also observe that for a column $j$, 
\begin{align}
  \label{eq:103}
  \mathbf X_k(:,j) \longrightarrow \mathbf Y_k(:,j)\longrightarrow \{j_{max}=j\},
\end{align}
i.e., the random variables $\{\mathbf X_k(:,j), \mathbf Y_k(:,j), \{j_{max}=j\}\}$ form a Markov chain.
We are interested in finding the overall probability of error. In the following, by $p_{k,j}(\bar y)$ we mean $Pr[\mathbf Y_k(:,j)=\bar y|j_{max}=j, E_1^c]$. Then we have 
\begin{align}
\nonumber  &  P_e[\texttt{local\_algo}(k)] = Pr[\mathbf X(1,j)=0|j_{max}=j]\\
\nonumber   \buildrel(a)\over = & Pr[\mathbf X(1,j)=0|j_{max}=j, E_1^c] +o(1)\\
%\nonumber   =& \sum_{\substack{\bar y \in \{0,1, *\}^k}} \hspace{-0.2in}Pr[\mathbf X(1,j)=0, \mathbf Y_k(:,j)=\bar y|j_{max}=j, E_1^c]+o(1)\\
\nonumber   \buildrel (b)\over = & \sum_{\substack{\bar y \in \{0,1,*\}^k\\ \bar y \in M}} \hspace{-0.2in}Pr[\mathbf X(1,j)=0, \mathbf Y_k(:,j)=\bar y|j_{max}=j, E_1^c] + o(1)\\
\nonumber   \buildrel (c)\over = &\sum_{\substack{\bar y \in \{0,1,*\}^k\\\bar y \in M}} Pr[\mathbf X(1,j)=0\big|\mathbf Y_k(:,j)=\bar y, E_1^c]\cdot p_{k,j}(\bar y) + o(1)\\
\nonumber   \buildrel (d)\over = &\sum_{\substack{\bar y \in \{0,1,*\}^k\\\bar y \in M}} \frac{Pr[\mathbf Y_k(:,j)=\bar y\big|\mathbf X(1,j)=0, E_1^c]}{2Pr[\mathbf Y_k(:,j)=\bar y|E_1^c]} p_{k,j}(\bar y) + o(1)\\
\nonumber   =&\sum_{\substack{\bar y \in \{0,1,*\}^k\\\bar y \in M}} \frac{p^{|\bar y|_1}(1-p)^{|\bar y|_0}}{p^{|\bar y|_1}(1-p)^{|\bar y|_0} + p^{|\bar y|_0}(1-p)^{|\bar y|_1}} p_{k,j}(\bar y) + o(1)
\end{align}
\begin{align}
%\nonumber  = &\sum_{\substack{\bar y \in \{0,1,*\}^k\\\bar y \in M}} \frac{p^{|\bar y|_1-|\bar y|_0}}{p^{|\bar y|_1-|\bar y|_0} + (1-p)^{|\bar y|_1-|\bar y|_0}}p_{k,j}(\bar y)+ o(1)\\
\nonumber  \le &\max_{\substack{\bar y \in \{0,1,*\}^k\\\bar y \in M}} \frac{p^{|\bar y|_1-|\bar y|_0}}{p^{|\bar y|_1-|\bar y|_0} + (1-p)^{|\bar y|_1-|\bar y|_0}}+ o(1)\\
\label{eq:2}  & \buildrel (e) \over = o(1),
\end{align}
where (a) follows from \eqref{eq:1}, (b) is true because of Lemma \ref{lemma:101} and Lemma \ref{lemma:102}, (c) is due to the Markov property \eqref{eq:103} and the notation of $p_{k,j}(\bar y)$,  (d) is the Bayes' expansion, and (e) is true since for $\bar y\in M$, $|\bar y|_1-|\bar y|_0$  goes to infinity with  $n$, and the fact that $\frac{p^x}{p^x+(1-p)^x}=o(x)$ for $p<1/2$. This proves the first part of Theorem \ref{thm:01}.

\noindent
{\bf Small cluster size:}
Now suppose  $k\le n^{\alpha -\gamma}$ for a constant $\gamma >0$. We show that in this case the most popular column has a finite number of  unerased entries. This allows us to find a lower bound on the probability of error.
%Suppose $m_Z(j)$ denotes the number of unerased entries in the $j$th column of $\mathbf Z_k$. 
\begin{lemma}[{\bf Finite  number of unerased entries}]
  \label{lemma:103}
W.h.p. $$\max_j |\mathbf Y_k(:,j)| \le \lfloor 1/\gamma\rfloor.$$ 
\end{lemma}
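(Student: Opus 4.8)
The plan is to show that, conditioned on $E_1^c$, the number of unerased entries in each column of $\mathbf Y_k$ is a binomial random variable with vanishing mean, and then to control the maximum over all $n$ columns by a union bound. First I would observe that on $E_1^c$ the top $k$ rows selected by Step 1 are exactly the rows of $A_1$, so $\mathbf Y_k$ coincides with the restriction $\mathbf Y_{A_1}$ of $\mathbf Y$ to the rows of $A_1$. Since each entry is erased independently with probability $\epsilon$, irrespective of its underlying value $\xi$ and of the BSC noise, the number of unerased entries in any fixed column $j$ satisfies $|\mathbf Y_k(:,j)| \sim B(k, 1-\epsilon)$ with $1-\epsilon = c/n^\alpha$. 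Because $k \le n^{\alpha-\gamma}$, the mean of this binomial is at most $k(1-\epsilon) \le c\, n^{-\gamma} \to 0$.

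Next I would bound the tail of a single column by the crude estimate $Pr[B(k,q) \ge m] \le \binom{k}{m} q^m \le (kq)^m/m!$, which follows by a union bound over all $m$-subsets of rows being simultaneously unerased. Taking $q = c/n^\alpha$ gives $kq \le c\, n^{-\gamma}$, so $Pr[|\mathbf Y_k(:,j)| \ge m] \le c^m n^{-\gamma m}/m!$ for each $j$.

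The key step is to choose the threshold $m = \lfloor 1/\gamma\rfloor + 1$ and take a union bound over the (at most $n$) columns:
$$Pr\left[\max_j |\mathbf Y_k(:,j)| \ge \left\lfloor 1/\gamma\right\rfloor + 1\right] \le n \cdot \frac{c^m}{m!}\, n^{-\gamma m} = \frac{c^m}{m!}\, n^{1-\gamma m}.$$
Since $m = \lfloor 1/\gamma\rfloor + 1 > 1/\gamma$, we have $\gamma m > 1$, so the exponent $1-\gamma m$ is strictly negative and the bound tends to $0$. This is exactly where the floor in the statement is essential: choosing $m = \lfloor 1/\gamma\rfloor$ would leave a nonnegative exponent (in fact zero when $1/\gamma$ is an integer), and the union bound would fail to vanish. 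Hence w.h.p. $\max_j |\mathbf Y_k(:,j)| \le \lfloor 1/\gamma\rfloor$.

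The only point requiring care—and the one I would regard as the main obstacle—is the conditioning on $E_1^c$, because $\mathbf Y_k$ is defined through the rows selected in Step 1, which themselves depend on the erasure pattern. I would sidestep any resulting dependence by bounding $Pr[\,\cdot\,, E_1^c] \le Pr[\max_j |\mathbf Y_{A_1}(:,j)| \ge m]$ for the \emph{fixed} matrix $\mathbf Y_{A_1}$ (whose rows are deterministically $A_1$), where the binomial analysis above applies verbatim, and then dividing by $Pr[E_1^c] = 1 - o(1)$ from \eqref{eq:1}. This preserves the $o(1)$ estimate and completes the argument.
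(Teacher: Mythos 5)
Your proposal is correct, and it follows exactly the route the paper indicates for this lemma (the paper omits the details, saying only that "the proof is based on bounding the tail of $\mathbf Y_k(:,j)$"): on $E_1^c$ each column count is $B(k,1-\epsilon)$ with mean at most $c n^{-\gamma}$, the tail bound $\binom{k}{m}(1-\epsilon)^m \le (c n^{-\gamma})^m/m!$ plus a union bound over the $n$ columns gives $O(n^{1-\gamma m})$, and $m=\lfloor 1/\gamma\rfloor+1$ makes the exponent strictly negative. Your handling of the conditioning on $E_1^c$ by passing to the fixed row set $A_1$ and dividing by $Pr[E_1^c]=1-o(1)$ is also sound, so nothing further is needed.
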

%\begin{proof}
%The proof is given in Section \ref{proof:lemma:103}
%end{proof}
The proof is based on bounding the tail of $\mathbf Y_k(:,j)$ and is not given here due to space restrictions.
Suppose $$I:=\{\bar y \in \{0,1,*\}^k : |\bar y|\le \lfloor 1/\gamma\rfloor \}.$$
 We want to find a lower bound for the total probability of error. By following the steps as in \eqref{eq:2} and replacing the event $M$ by the event $I$ (this replacement is justified due to Lemma \ref{lemma:103}), we have 
\begin{align*}
&  P_e[\texttt{local\_algo}(k)]  \\
 = &\sum_{\substack{\bar y \in \{0,1,*\}^k\\\bar y \in I}} \frac{p^{|\bar y|_1-|\bar y|_0}}{p^{|\bar y|_1-|\bar y|_0} + (1-p)^{|\bar y|_1-|\bar y|_0}}p_{k,j}(\bar y)+ o(1)\\
 \ge &\min_{\substack{\bar y \in \{0,1,*\}^k\\\bar y \in I}} \frac{p^{|\bar y|_1-|\bar y|_0}}{p^{|\bar y|_1-|\bar y|_0} + (1-p)^{|\bar y|_1-|\bar y|_0}}+ o(1)\\
 \buildrel (a) \over\ge &\frac{p^{\lfloor \frac{1}{\gamma}\rfloor}}{p^{\lfloor \frac{1}{\gamma}\rfloor} + (1-p)^{\lfloor \frac{1}{\gamma}\rfloor}}+o(1)
%  & = \frac{p^{\lfloor \frac{1}{\gamma}\rfloor}}{p^{\lfloor \frac{1}{\gamma}\rfloor} + (1-p)^{\lfloor \frac{1}{\gamma}\rfloor}}+o(1),\\
\end{align*}
where (a) is trues since  $|\bar y|_1 - |\bar y|_0 \le |\bar y| \le \lfloor 1/\gamma\rfloor$ for $\bar y \in I$, and for $x\in \mathbb R$, $\frac{p^{x}}{p^{x}+ (1-p)^{x}}$ is a decreasing function of $x$ for $p<1/2$. Taking $\lim \inf$ to both the  sides  proves the claim.

\section{Proofs of lemmas}
\label{sec:proof_lemma}
To prove Lemma \ref{lemma:101} and Lemma \ref{lemma:102}, we  need the following theorem.
 Suppose $Q(t)$ denotes the upper tail of a standard normal distribution, i.e., $Q(t):=\frac{1}{\sqrt{2\pi}} \int_t^\infty e^{-t^2/2} dt$.
\begin{theorem}[{\bf Moderate deviations for binomial distribution}]
\label{thm:large}
Suppose $X_n\sim B(n,p_n)$. If $t_n\rightarrow \infty$ in such a way that $t_n^6=o\left(Var(X_n)\right)=o(np_n(1-p_n))$, then
$$Pr\big[X_n > np_n + t_n \sqrt{np_n(1-p_n)}\big] \doteq Q(t_n).$$
\end{theorem}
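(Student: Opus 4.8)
The plan is to read the event as a moderate deviation for a sum of $n$ i.i.d. Bernoulli$(p_n)$ variables and to verify, via an exponential change of measure together with a sharp local estimate, that the Gaussian tail is reproduced exactly to leading order. The first observation is that the hypothesis $t_n^6 = o(np_n(1-p_n))$ is precisely the classical moderate-deviation scaling: writing $\sigma_n^2 := np_n(1-p_n)$, it is equivalent to $t_n^3 = o(\sigma_n)$, i.e. $t_n = o(\sigma_n^{1/3})$. This is exactly the threshold at which the skewness correction to the central limit theorem becomes negligible, so one expects $Pr[X_n > np_n + t_n\sigma_n] \doteq Q(t_n)$. Throughout I would also use the Mills-ratio asymptotic $Q(t_n) \doteq \frac{1}{t_n\sqrt{2\pi}}\, e^{-t_n^2/2}$.

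The main computation proceeds by exponential tilting (the Cram\'er transform). Let $I(q) := q\log(q/p_n) + (1-q)\log\frac{1-q}{1-p_n}$ be the relative-entropy rate function and set the threshold $\tau_n := np_n + t_n\sigma_n$, so that the tilted parameter $q_n := \tau_n/n = p_n + t_n\sqrt{p_n(1-p_n)/n}$ centers the tilted binomial at $\tau_n$. A Taylor expansion of $I$ about $p_n$, using $I(p_n)=I'(p_n)=0$, $I''(p_n)=1/(p_n(1-p_n))$ and $I'''(p_n)=(2p_n-1)/(p_n^2(1-p_n)^2)$, gives
$$n I(q_n) = \frac{t_n^2}{2} + \frac{2p_n-1}{6}\cdot\frac{t_n^3}{\sigma_n} + O\!\left(\frac{t_n^4}{\sigma_n^2}\right) = \frac{t_n^2}{2} + o(1),$$
where the cubic term is $O(t_n^3/\sigma_n) = o(1)$ precisely by the hypothesis. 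This isolates the surviving exponent $t_n^2/2$ and is the step where the condition $t_n^6 = o(\sigma_n^2)$ is consumed.

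It remains to supply the polynomial prefactor, which I would do directly. By Stirling's formula, uniformly for $m/n$ in a shrinking neighborhood of $p_n$,
$$\binom{n}{m}p_n^m(1-p_n)^{n-m} \doteq \frac{1}{\sqrt{2\pi\, n\, \tfrac{m}{n}\bigl(1-\tfrac{m}{n}\bigr)}}\; e^{-nI(m/n)}.$$
Writing $m=\tau_n+j$ and expanding $nI(m/n) = nI(q_n) + h_n j + O(j^2/\sigma_n^2)$ with tilt $h_n := I'(q_n) \doteq t_n/\sigma_n$, the tail becomes a nearly geometric sum: $\sum_{j\ge 0} e^{-h_n j} \doteq 1/h_n \doteq \sigma_n/t_n$, while the prefactor evaluates to $\frac{1}{\sqrt{2\pi}\,\sigma_n}(1+o(1))$ since the tilt is small and $\tfrac{m}{n}\approx q_n \approx p_n$. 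Multiplying these yields $Pr[X_n > \tau_n] \doteq \frac{1}{t_n\sqrt{2\pi}}\, e^{-t_n^2/2} \doteq Q(t_n)$, as claimed.

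The hard part is the uniformity, not the leading-order bookkeeping. In the intended application $p_n = \Theta(n^{-2\alpha}) \to 0$, so this is a triangular-array (sparse-binomial) regime, and one cannot simply quote a fixed-$p$ moderate-deviation or Bahadur--Rao theorem. I would therefore have to verify that the Stirling expansion above holds with $1+o(1)$ relative error uniformly over the full summation range $m \in [\tau_n,n]$ even as $p_n \to 0$ (in particular controlling where $m/n$ is a constant multiple of $p_n$), that the quadratic remainder $j^2/\sigma_n^2$ is negligible over the effective range $j \lesssim \sigma_n/t_n$ (where it is $\lesssim 1/t_n^2 \to 0$), and that the contribution beyond the Gaussian core is exponentially negligible. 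In the tilting language this amounts to checking that the tilted success probability stays comparable to $p_n$ (true, since $q_n - p_n = t_n\sqrt{p_n(1-p_n)/n} = o(p_n)$ as $t_n/\sigma_n \to 0$), so that the tilted variance $nq_n(1-q_n) \doteq \sigma_n^2$ remains large and the local central limit theorem for the tilted binomial applies with the right error control. These uniform estimates, rather than any single clever identity, are where the real work lies.
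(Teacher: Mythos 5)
Your proposal is sound, and there is no in-paper proof to compare it against: the authors omit the argument entirely, stating only that it is ``very similar'' to Feller's treatment of moderate deviations for constant $p$ and needs adaptation to $p_n\to 0$. Your exponential-tilting/rate-function presentation is the standard modern packaging of exactly that Feller argument (a sharp local estimate of the binomial terms followed by summation of a near-geometric tail), so in substance you are on the same route. The bookkeeping checks out: $nI(q_n)=t_n^2/2+O(t_n^3/\sigma_n)+O(t_n^4/\sigma_n^2)$ with the cubic term killed precisely by $t_n^3=o(\sigma_n)$, the tilt $h_n\doteq t_n/\sigma_n$ gives the $\sigma_n/t_n$ factor that converts $\frac{1}{\sqrt{2\pi}\sigma_n}e^{-t_n^2/2}$ into the Mills-ratio form of $Q(t_n)$, and you correctly identify the genuine content of the ``adaptation'' --- uniformity of the local estimate as $p_n\to 0$, which holds because $q_n/p_n\to 1$ and $np_n\ge\sigma_n^2\to\infty$ keep Stirling and the tilted variance under control. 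The one point to make explicit in a full write-up is the truncation of the sum at $j\asymp(\sigma_n/t_n)\,\omega_n$ for some $\omega_n\to\infty$ with $\omega_n=o(t_n)$, so that the quadratic remainder $j^2/\sigma_n^2$ stays $o(1)$ while the discarded tail is an $e^{-\omega_n}$ fraction; you flag this but do not pin down the cutoff.
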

The above theorem is an adaptation of a theorem about moderate deviations of binomials when $p_n$ is a constant \cite[p. 193]{Feller1}.  The proof is very similar to the one presented in \cite{Feller1} for the constant probability case, and is omitted here due to lack of space.

\subsection{Proof of Lemma \ref{lemma:101}}
\label{proof:lemma:101}
1) 
Recall that we have conditioned on the event that  all the rows in the top $k$ neighbors chosen by $\texttt{local\_algo}(k)$ are ``good''. Suppose $k=n^{\alpha-\gamma_n}$. Let $S$ be the set of columns $j$ such that $\mathbf X(1,j)=1$. Thus $|S|\sim B(n,1/2)$ and due to Chernoff bound we have w.h.p.  $|S|\ge n/3$. For a column $j\in S$ we see that $|\mathbf Y_k(:,j)|_1\sim B(k,(1-\epsilon)(1-p))$, and they are independent for different values of $j$. Thus for $j\in S$,
    \begin{align}
\nonumber     & Pr\big[|\mathbf Y_k(:,j)|_1\ge t\big] \ge Pr\big[|\mathbf Y_k(:,j)|_1=t\big]\\
\nonumber \buildrel (a)\over \ge &{k \choose t} ((1-\epsilon)(1-p))^t \epsilon^{k-t}\\
\label{eq:222}       \buildrel(b)\over  \ge &\left(\frac{k}{t}\right)^t \left(\frac{c(1-p)}{n^\alpha}\right)^t e^{-2\ln (2) c/n^{\gamma_n}}, \text{ for large $n$}\\
\nonumber      \buildrel (c) \over \ge &\left(\frac{c(1-p)}{tn^{\gamma_n}}\right)^t e^{-2\ln (2) c}.
    \end{align}
    where (a) is true since $1-(1-\epsilon)(1-p) \ge \epsilon$, (b) follows since $\epsilon=1-c/n^\alpha$, $1-x \ge e^{-2\ln (2) x}$ for $x\in [0,1/2]$, and ${k \choose t} \ge \left(\frac{k}{t}\right)^t$ (see \cite[p. 434]{motwani95}), and (c) is true because $\gamma_n \ge 0$.
    Since w.h.p. $|S|\ge n/3$,  we now have 
    \begin{align}
      \nonumber     &  Pr[|\mathbf Y_k(:,j_{max})|_1 < t] \\
      \nonumber    \le & Pr\left[\max_{j\in S} |\mathbf Y_k(:,j)|_1 < t \big| |S|\ge n/3 \right]+o(1)\\
      \nonumber \le & \left( 1- \left(\frac{c(1-p)}{tn^{\gamma_n}}\right)^t e^{-2\ln (2) c}\right)^{n/3}+o(1)\\
      \label{eq:201} \le & e^{-\frac{n}{3}\left(\frac{c(1-p)}{tn^{\gamma_n}}\right)^t e^{-2\ln (2) c} }+o(1)
    \end{align}
Suppose we put $t= t_0:=\min\{\sqrt{\log n}, \frac{1}{2\gamma_n}\}$. Then
$$\left(\frac{tn^{\gamma_n}}{c(1-p)}\right)^t = \frac{ n^{\gamma_n t} t^t}{(c(1-p))^t} \buildrel (a) \over \le \sqrt n  \left(\frac{\sqrt{\log n}}{c(1-p)}\right)^{\sqrt{\log n}}\hspace{-0.1in}=o(n),$$
where (a) follows since $\gamma_nt \le 1/2$ and $t\le \sqrt{\log n}$. Thus from \eqref{eq:201} we obtain
\begin{align*}
 & Pr[|\mathbf Y_k(:,j_{max})|_1 < t_0] 
\le e^{-\frac{1}{o(1)}}+o(1)=o(1).
\end{align*}
This proves the first part of the lemma.

2) Recall that we have assumed $k=n^\alpha g_n$ for $g_n \ge 1$. By following a very similar analysis as in the first part, we see that w.h.p. $|\mathbf Y_k(:,j_{max})|_1 \ge \sqrt{\log n}$. In particular for $g_n=1$ (or equivalently for $k=n^\alpha$),  \eqref{eq:222} becomes 
\begin{align}
\nonumber   Pr\big[|\mathbf Y_k(:,j)|_1\ge t\big] 
 \ge & \left(\frac{k}{t}\right)^t \left(\frac{c(1-p)}{n^\alpha}\right)^t e^{-2\ln (2) c}\\
\label{eq:555}  & =\left(\frac{c(1-p)}{t}\right)^t e^{-2\ln (2) c}.
    \end{align}
Observe that  for two random variables $X$ and $Y$ such that $X\sim B(n_1,p)$ and $Y\sim B(n_2,p)$ with $n_1 \ge n_2$, we have $Pr[X\ge t] \ge Pr[Y\ge t]$. Thus using \eqref{eq:555} we have 
\begin{align*}
Pr\left[|\mathbf Y_k(:,j)|_1\ge t\big|g_n\ge 1\right]  &\ge Pr\left[|\mathbf Y_k(:,j)|_1\ge t\big|g_n=1\right] \\
&\ge \left(\frac{c(1-p)}{t}\right)^t e^{-2\ln (2) c}.
\end{align*}
Hence for $t=\sqrt{\log n}$, \eqref{eq:201} has the following counterpart,
\begin{align*}
          Pr[|\mathbf Y_k(:,j_{max})|_1 < t] 
\le &  e^{-\frac{n}{3}\left(\frac{c(1-p)}{t}\right)^t e^{-2\ln (2) c} }+o(1)\\
&=e^{-\frac{1}{o(1)}}+o(1)=o(1).
    \end{align*}

But in Lemma \ref{lemma:102} we need better bounds for $g_n \rightarrow \infty$, and we consider this case now. Recall that for $j\in S$, $\mu_Y=\mathbb E[|\mathbf Y_k(:,j)|_1 ]=c(1-p)g_n$ and $\sigma_Y^2=Var(|\mathbf Y_k(:,j)|_1 )=g_nc(1-p)(1-(1-\epsilon)(1-p))$. We define $t_n:=\min\{\sigma_Y^{1/4}, \sqrt{\log n}\}$. Since $\sigma_Y\rightarrow \infty$, we have $t_n^6=o(\sigma_Y^2)$, and then Theorem \ref{thm:large} implies that for a column $j\in S$,
\begin{align*}
  Pr[|\mathbf Y_k(:,j)|_1 > \mu_Y +t_n \sigma_Y] \doteq& Q(t_n)
  \buildrel (a)\over \doteq \frac{1}{\sqrt{2\pi}t_n}e^{-t_n^2/2}\\ 
   \ge &\frac{1}{2}\frac{1}{\sqrt{2\pi}t_n}e^{-t_n^2/2}, \text{ for large $n$}\\
  &\buildrel (b)\over = \Omega\left(\frac{1}{\sqrt{n \log n}}\right).
\end{align*}
where (a) is true because $Q(t)\doteq \frac{1}{\sqrt{2\pi}t}e^{-t^2/2}$ \cite[Lemma 1.2]{Feller1}, and (b) is true since $t_n \le \sqrt{\log n}$. Since w.h.p. $|S|\ge n/3$, we have
\begin{align}
\nonumber & Pr[|\mathbf Y_k(:,j_{max})|_1  \le \mu_Y +t_n \sigma_Y]\\
\nonumber \le & Pr\left[\max_{j\in S} |\mathbf Y_k(:,j)|_1 \le \mu_Y +t_n \sigma_Y \big| |S|\ge n/3\right]+o(1)\\
\nonumber  \le &\left(1-\Omega\left(\frac{1}{\sqrt{n \log n}}\right)\right)^{n/3}+o(1)=o(1).
%\nonumber \le &e^{-\Omega\left(\frac{n}{\sqrt{n\log n}}\right)}+o(1)
%= o(1).
\end{align}
Thus w.h.p. $|\mathbf Y_k(:,j_{max})|_1  \ge \mu_Y +t_n \sigma_Y$, if $g_n \rightarrow \infty$. We have already observed that w.h.p. $|\mathbf Y_k(:,j_{max})|_1 \ge \sqrt{\log n}$. Thus the lemma is implied.
%It is also true that $\mu_Y$ and $\sigma_Y$ are bounded by a constant, whenever $g_n$ is bounded by a constant. 
%These three observations imply the lemma.

\subsection{Proof of Lemma \ref{lemma:102}}
\label{proof:lemma:102}
Lemma \ref{lemma:101} gives us a lower bound for $|\mathbf Y_k(:,j_{max})|_1$ that holds w.h.p.. Next we  find an upper bound for  $|\mathbf Y_k(:,j_{max})|_0$ to prove Lemma \ref{lemma:102}. 

 First we condition on the event that $\mathbf X(1,j_{max})=1$. We observe that
$$|\mathbf Y_k(:,j)|_0 \longrightarrow |\mathbf Y_k(:,j)|_1 \longrightarrow \{j_{max}=j\}.$$
Then conditioned on the value of $|\mathbf Y_k(:,j_{max})|_1=t$, the distribution of $|{\mathbf Y_k(:,j_{max})}|_0$ does not depend on the fact that $j_{max}$ is the most popular column chosen by the algorithm, and hence $|{\mathbf Y_k(:,j_{max})}|_0 \sim B\left(k-t, p_0\right)$, where $p_0:=\frac{p(1-\epsilon)}{p(1-\epsilon)+\epsilon}$.  This is because for a given column $j$ of $\mathbf Y_k$, upon observing that there are exactly $t$ 1's, the other $k-t$ entries are i.i.d. with probability of 0 being $p_0$.

1) Suppose $k=n^{\alpha - \gamma_n}$ such that $\gamma_n\rightarrow 0$. We define $b(k,p,i):={ k\choose i} p^i (1-p)^{n-i}$ to be the $i$th binomial term, and observe that $b(k,p,i)\le \left(kpe/i\right)^i$, since ${k\choose i}\le (ke/i)^i$ (see \cite[p. 434]{motwani95}). We see that 
\begin{align*}
&  Pr\left[|{\mathbf Y_k(:,j_{max})}|_0  \ge \frac{\sqrt{\log n}}{2}\right] =\sum_{i=\frac{\sqrt{\log n}}{2}}^{k-t} b(k-t, p_0,i)\\
 =& \sum_{i=\frac{\sqrt{\log n}}{2}}^{2\log n}b(k-t, p_0,i) + \sum_{i=2\log n+1}^{k-t} b(k-t, p_0,i)\\
\buildrel (a)\over \le & 2\log n \cdot b\left(k-t, p_0, \frac{\sqrt{\log n}}{2}\right) + k\cdot b(k-t, p_0, 2\log n+1)\\
%& \buildrel (b)\over \le 2\log n \cdot b\left(k, p_0, \frac{\sqrt{\log n}}{2}\right) + k\cdot b(k, p_0, 2\log n+1)\\
\buildrel (b) \over \le &2\log n \left(\frac{(k-t)p_0e}{\sqrt{\log n}/2}\right)^{\frac{\sqrt{\log n}}{2}} \hspace{-0.1in}+\hspace{-0.05in} (k-t) \left(\frac{(k-t)p_0e}{2\log n+1}\right)^{2\log n+1}\\ 
%\le & 2\log n \left(\frac{kp_0e}{\sqrt{\log n}/2}\right)^{\frac{\sqrt{\log n}}{2}} + k \left(\frac{kp_0e}{2\log n+1}\right)^{2\log n+1}\\
\buildrel (c)\over \le &2 \log n \left(\frac{2c'}{n^{\gamma_n}\sqrt{\log n}}\right)^{\frac{\sqrt{\log n}}{2}} \hspace{-0.1in}  + k \left(\frac{c'}{n^{\gamma_n} (2\log n+1)}\right)^{2\log n+1}\\
& = o(1).
\end{align*}
where (a) is true since $b(k,p,i)$ is a decreasing function of $i$ for $i$ more than $kp$ and we have $(k-t)p_0=o(1)$, (b) is due to the fact that $b(k,p,i)\le (kpe/i)^i$ , and (c) follows by observing that $kp_0e\le \left(\frac{c'}{n^{\gamma_n}}\right)$ for a constant $c'>0$. Thus 
w.h.p. we have $|{\mathbf Y_k(:,j_{max})}|_0  < \frac{\sqrt{\log n}}{2}$.

Now suppose $\gamma_n > \frac{1}{2\sqrt{\log n}}$. Then we see that 
\begin{align*}
  &Pr\left[|{\mathbf Y_k(:,j_{max})}|_0  \ge \frac{1}{4\gamma_n}\right]  =\sum_{i=\frac{1}{4\gamma_n}}^{k-t} b(k-t, p_0,i) \\
  \le & \sum_{i=\frac{1}{4\gamma_n}}^\infty b(k-t, p_0,i) \buildrel (a)\over \le \sum_{i=\frac{1}{4\gamma_n}}^\infty ((k-t) p_0e/i)^i\\  
\buildrel (b)\over \le & \sum_{i=\frac{1}{4\gamma_n}}^\infty \left( \frac{4c' \gamma_n}{n^{\gamma_n}}\right)^i \buildrel (c)\over = \Theta\left(\left( \frac{4c' \gamma_n}{n^{\gamma_n}}\right)^{1/4\gamma_n}\right)\\
 \buildrel (d) \over < & \Theta\left(\frac{(4c' \gamma_n)^{\sqrt{\log n}/2}}{n^{1/4}}\right)=  o(1),
\end{align*}
where (a) is true since $b(k,p,i)\le (kpe/i)^i$, (b) follows because $kp_0e\le \left(\frac{c'}{n^{\gamma_n}}\right)$ for a constant $c'$, (c) is true by observing that for $x=o(1)$, we have $\sum_{i=m}^\infty x^i = \Theta(x^m)$, and (d) follows since $\frac{1}{4\gamma_n} < \frac{\sqrt{\log n}}{2}$ whenever $\gamma_n > \frac{1}{2\sqrt{\log n}}$. 

Thus we have proved that w.h.p. $|{\mathbf Y_k(:,j_{max})}|_0  < \min \{ \frac{\sqrt{\log n}}{2}, \frac{1}{4\gamma_n}\}$. This together with the observation in Lemma \ref{lemma:101} that w.h.p. $|{\mathbf Y_k(:,j_{max})}|_1  \ge \min \{ \sqrt{\log n}, \frac{1}{2\gamma_n}\}$, proves that w.h.p. $|{\mathbf Y_k(:,j_{max})}|_1 -|{\mathbf Y_k(:,j_{max})}|_0$ increases to $\infty$ with $n$.

2) Now we consider the other case of $k=n^\alpha g_n$ for $g_n\ge 1$. If $g_n$ is upper bounded by a constant, then arguments very similar to those used in the first part tell us that w.h.p. $|{\mathbf Y_k(:,j_{max})}|_0 <\sqrt{\log n}/2$.  

In the remaining part of the proof, we assume that $g_n\rightarrow \infty$. Recall that for a column $j$ such that $\mathbf X(1,j)=1$, we have  $\mu_Y=\mathbb E[|\mathbf Y_k(:,j)|_1 ]=k(1-\epsilon)(1-p)$ and $\sigma_Y^2=Var(|\mathbf Y_k(:,j)|_1 )=k(1-p)(1-\epsilon)(1-(1-\epsilon)(1-p))$. Conditioned on the value of  $|\mathbf Y_k(:,j_{max})|_1=t$, suppose $\mu_{\bar Y}$ and $\sigma_{\bar Y}^2$ denote the conditional mean and variance of $|\mathbf Y_k(:,j_{max})|_0$.  We observe that for $t\ge  \mu_Y$ and large enough $n$,
$$\mu_{\bar Y}=(k-t)p_0\le \mu_Y,\text{ and } \sigma_{\bar Y}^2 =(k-t)p_0(1-p_0)\le 2 \sigma_Y^2.$$
Suppose $t_n:=\min\{\sigma_Y^{1/4},\sqrt{\log n}\} $. Since $\sigma_Y\rightarrow \infty$, we have $t_n^6=o(\sigma_Y^2)$, and since w.h.p. $y_1:=|\mathbf Y_k(:,j_{max})|_1\ge \mu_Y$ (see Lemma \ref{lemma:101}), using Theorem \ref{thm:large} we obtain
\begin{align*}
&  Pr\left[|{\mathbf Y_k(:,j_{max})}|_0  > \mu_{Y}+ \frac{t_n}{2} \sigma_{Y}\right] \\
 \le &Pr\left[|{\mathbf Y_k(:,j_{max})}|_0  > \mu_{\bar Y}+ \frac{1}{2\sqrt 2} t_n \sigma_{\bar Y}\big| y_1\ge \mu_{Y}\right] +o(1)\\
&  \doteq Q\left(\frac{t_n}{2\sqrt 2 }\right) = o(1). 
\end{align*}
Thus w.h.p.  $|{\mathbf Y_k(:,j_{max})}|_0 \le \max\{\sqrt{\log n}/2, \mu_Y+\frac{t_n}{2}\sigma_Y\}$. This together with the observation made in Lemma \ref{lemma:101} that w.h.p. $|{\mathbf Y_k(:,j_{max})}|_1 \ge\max\{\sqrt{\log n}, \mu_Y+t_n\sigma_Y\}$, proves that w.h.p. $|{\mathbf Y_k(:,j_{max})}|_1 - |{\mathbf Y_k(:,j_{max})}|_0$ increases to $\infty$ with $n$.

\noindent
{\bf Remark: }
  In the above proof, we had conditioned on the event that $\mathbf X(1, j_{max})=1$. When we condition on  $\mathbf X(1,j_{max})=0$, we have $p_0=\frac{(1-p)(1-\epsilon)}{(1-p)(1-\epsilon)+\epsilon}$, and a  similar set of steps prove the claim.

\section{Conclusion}
\label{sec:con}
We have considered estimation of a binary random field obtained by permuting rows and columns of a block constant matrix, by observing a  sub-sampled and noisy version. It would be interesting to analyze the performance of ``local'' algorithms on a more general class of matrices obtained from realizations of  a ``smooth'' stochastic process. Further, non-uniform sampling models are also of interest.
% conference papers do not normally have an appendix

% use section* for acknowledgement
\section*{Acknowledgment}

The work of Kishor Barman was supported by the Infosys fellowship and the  Microsoft Research India Travel Grants Program.. The work of Onkar Dabeer was supported by the XI plan funding.

\bibliographystyle{IEEEtran}
\bibliography{myBib}

% that's all folks
\end{document}